\newcommand{\extra}[1]{}
\newcommand{\comment}[1]{}
\newtheorem{theorem}{Theorem}[section]
\newtheorem{proposition}[theorem]{Proposition}
\theoremstyle{remark}
\def\squareforqed{\hbox{\rlap{$\sqcap$}$\sqcup$}}
\def\qed{\ifmmode\squareforqed\else{\unskip\nobreak\hfil
\penalty50\hskip1em\null\nobreak\hfil\squareforqed
\parfillskip=0pt\finalhyphendemerits=0\endgraf}\fi}
\def\endenv{\ifmmode\;\else{\unskip\nobreak\hfil
\penalty50\hskip1em\null\nobreak\hfil\;
\parfillskip=0pt\finalhyphendemerits=0\endgraf}\fi}
\renewenvironment{proof}{\noindent \textbf{{Proof~} }}{\qed\medskip}
\newenvironment{proof+}[1]{\noindent \textbf{{Proof #1~} }}{\qed\medskip}
\mathchardef\ordinarycolon\mathcode`\:
\def\vcentcolon{\mathrel{\mathop\ordinarycolon}}
\newcommand{\nc}{\newcommand}
\nc{\rnc}{\renewcommand}
\nc{\beq}{\begin{equation}}
\nc{\eeq}{{\end{equation}}}
\nc{\beqa}{\begin{eqnarray}}
\nc{\eeqa}{\end{eqnarray}}
\nc{\lbar}[1]{\overline{#1}}
\nc{\bra}[1]{\langle#1|}
\nc{\ket}[1]{|#1\rangle}
\nc{\ketbra}[2]{|#1\rangle\!\langle#2|}
\nc{\braket}[2]{\langle#1|#2\rangle}
\nc{\proj}[1]{| #1\rangle\!\langle #1 |}
\nc{\avg}[1]{\langle#1\rangle}
\nc{\smfrac}[2]{\mbox{$\frac{#1}{#2}$}}
\nc{\tr}{\operatorname{tr}}
\nc{\tracedist}[1]{\Delta_{}\!\left( #1 \right)}
\nc{\fid}[1]{F\!\left( #1 \right)}
\nc{\ox}{\otimes}
\nc{\dg}{\dagger}
\nc{\dn}{\downarrow}
\nc{\cA}{{\cal A}}
\nc{\cB}{{\cal B}}
\nc{\cC}{{\cal C}}
\nc{\cD}{{\cal D}}
\nc{\cE}{{\mathcal E}}
\nc{\cF}{{\cal F}}
\nc{\cG}{{\cal G}}
\nc{\cH}{{\cal H}}
\nc{\cI}{{\cal I}}
\nc{\cJ}{{\cal J}}
\nc{\cK}{{\cal K}}
\nc{\cL}{{\cal L}}
\nc{\cM}{{\cal M}}
\nc{\cN}{{\cal N}}
\nc{\cO}{{\cal O}}
\nc{\cP}{{\cal P}}
\nc{\cR}{{\cal R}}
\nc{\cS}{{\cal S}}
\nc{\cT}{{\cal T}}
\nc{\cU}{{\cal U}}
\nc{\cV}{{\cal V}}
\nc{\cX}{{\cal X}}
\nc{\cZ}{{\cal Z}}
\nc{\entI}{{\bf I}}
\nc{\entIarrow}{{\bf I}^{\leftarrow}}
\nc{\entH}{{\bf H}}
\nc{\entS}{{\bf S}}
\nc{\entHmin}{\mathbf{H}_{\min}}
\nc{\entHtwo}{\mathbf{H}_{2}}
\nc{\aentHmin}{\hat{\mathbf{H}}_{\min}}
\nc{\binent}{h}
\nc{\supp}{\textrm{supp}}
\nc{\entF}{{\bf E}_f}
\nc{\isom}{\simeq}
\nc{\rank}{\operatorname{rank}}
\nc{\rar}{\rightarrow}
\nc{\lrar}{\longrightarrow}
\nc{\polylog}{\operatorname{polylog}}
\nc{\poly}{\operatorname{poly}}
\nc{\weight}{\textbf{w}}
\nc{\hamdist}{d_{H}}
\nc{\Sp}{{{\mathbb S}}}
\nc{\RR}{{{\mathbb R}}}
\nc{\CC}{{{\mathbb C}}}
\nc{\FF}{{{\mathbb F}}}
\nc{\NN}{{{\mathbb N}}}
\nc{\ZZ}{{{\mathbb Z}}}
\nc{\PP}{{{\mathbb P}}}
\nc{\QQ}{{{\mathbb Q}}}
\nc{\UU}{{{\mathbb U}}}
\nc{\OO}{{{\mathbb O}}}
\nc{\EE}{{{\mathbb E}}}
\nc{\id}{{\operatorname{id}}}
\nc{\qubitchannel}{\id_2}
\nc{\bitchannel}{\overline{\id}_2}
\nc{\be}{\begin{equation}}
\nc{\ee}{{\end{equation}}}
\nc{\bea}{\begin{eqnarray}}
\nc{\eea}{\end{eqnarray}}
\nc{\Hom}[2]{\mbox{Hom}(\CC^{#1},\CC^{#2})}
\nc{\rU}{\mbox{U}}
\nc{\ob}[1]{#1}
\newcommand{\eqdef}	{\stackrel{\textrm{def}}{=}}
\newcommand{\ex}[1]	{\mathbf{E}\left\{ #1 \right\}}
\newcommand{\exc}[2]	{\underset{#1}{\mathbf{E}}\left\{ #2 \right\}}
\newcommand{\pr}[1]	{\mathbf{P}\left\{ #1 \right\}}
\newcommand{\ceil}[1]	{\left\lceil #1 \right\rceil}
\nc{\unif}{\textrm{unif}}
\nc{\circuit}{\textrm{circ}}
\nc{\haar}{\textrm{haar}}
\nc{\clifford}{\textrm{clifford}}
\nc{\Mclifford}{\operatorname{M}_{\clifford}}
\nc{\Mcirc}{\operatorname{M}_{\circuit}}
\begin{document}

\sloppy

\title{Short random circuits define good quantum error correcting codes}

\author{
  \IEEEauthorblockN{Winton Brown}
  \IEEEauthorblockA{D\'{e}partement de Physique\\
    Universit\'{e} de Sherbrooke, Canada\\
    Email: winton.brown@usherbrooke.ca}
  \and
  \IEEEauthorblockN{Omar Fawzi}
  \IEEEauthorblockA{Institute for Theoretical Physics\\
    ETH Zuerich, Switzerland\\
    Email: ofawzi@phys.ethz.ch}
}



\maketitle


\begin{abstract}
We study the encoding complexity for quantum error correcting codes with large rate and distance. We prove that random Clifford circuits with $O(n \log^2 n)$ gates can be used to encode $k$ qubits in $n$ qubits with a distance $d$ provided $\frac{k}{n} < 1 - \frac{d}{n} \log_2 3 - \binent(\frac{d}{n})$. In addition, we prove that such circuits typically have a depth of $O( \log^3 n)$. 
\end{abstract}

\section{Introduction}
Error-correcting codes are fundamental objects with many theoretical and practical applications. In the context of quantum information, quantum error correcting codes allow the preservation of quantum data in the presence of noise; see \cite{Got97} for a reference. In addition to their application to reliable communication and computation, quantum error correcting codes have found applications in cryptography such as quantum key distribution \cite{SP00} and secret sharing \cite{CGL99}. 
In this paper, a quantum error correcting code is a subspace of the Hilbert space associated with an $n$-qubit space. Such a code has two important parameters: the number of qubits $k$ that can be encoded in this subspace and the distance $d$ which quantifies the number of errors that can be corrected by the code. It is desirable to have both $k$ and $d$ as large as possible. 

Our objective here is to understand how efficient the encoders of good quantum error correcting codes can be. This question has been studied for classical codes in several computation models, see e.g., \cite{BM05, GHKPV12}. In this paper, we work in the circuit model with a gate set composed of all two-qubit gates. In this model it is simple to see that to obtain a linear distance, a linear number of gates are needed and the depth has to be at least $\Omega(\log n)$. On the other hand, it is known that a large family of quantum error correcting codes known as stabilizer codes, which include many good codes, can have an encoder using only $O(n^2)$ gates \cite{CG97}. 

Specifically, the encoders we consider here are constructed by choosing a circuit at random with a given number of gates. Random quantum circuits have been well studied in the quantum information literature. Random quantum circuits of polynomial size are meant to be efficient implementations that inherit many useful properties of ``uniformly'' chosen unitary transformations, which are typically very inefficient. Most of the work has been in analyzing convergence properties of the random circuit model \cite{RM, ELL05, ODP07, HL09, BVPRL}. In some sense, we are here also interested in the convergence properties because our aim is to show that short random circuits define codes that are as good as the codes defined by completely random Clifford unitaries.

\subsection{Results}
We prove that random quantum circuits with $O(n \log^2 n)$ gates can be used to encode $k$ qubits in $n$ qubits with a distance $d$ provided $\frac{k}{n} < 1 - \frac{d}{n} \log_2 3 - \binent(\frac{d}{n})$. This is asymptotically the same as the distance of a code defined by a  random unitary from the complete Clifford group, which is known to achieve the quantum Gilbert-Varshamov bound \cite{CRSS97,Got97}. But a typical Clifford unitary is only known to be computable using a circuit with $\Omega(n^2)$ gates.

We also study another complexity measure for circuits which is the depth. The depth is simply the number of time steps needed to evaluate the circuit, keeping in mind that gates acting on disjoint qubits can be executed simultaneously. By parallelizing the random quantum circuit mentioned in the previous paragraph, we prove the existence of codes achieving the same parameters with an encoding of depth $O(\log^3 n)$ and size $O(n \log^2 n)$. We remark that it was proved in \cite{MN01} that the encoding and decoding operation of any stabilizer code can be implemented with quantum circuits of depth $O(\log n)$ using $O(n^2)$ qubits of ancilla, i.e., qubits in some fixed state that are restored to their original state at the end of the computation. In contrast, random quantum circuits do not use any ancilla qubits.

We say here a brief word on the proof. The first step of the proof is to relate the property of interest, which is the distance of the code, to the second moment operator of the random quantum circuit. This operator can then be studied as a Markov chain and the distance of the code translates to a property of the Markov chain. The convergence times of such Markov chains arising from the second order moments have been previously studied in \cite{ODP07, HL09}. However, these convergence times are not sufficient to prove the result we are aiming for and can only give useful bounds when $\Omega(n^2)$ gates are applied. Instead, we analyze the Markov chain in a finer way without using the spectral gap and bound the probabilities of going from a state $\ell$ to state $m$ within $O(n \log^2 n)$ steps as a function of $\ell$ and $m$.


\section{Preliminaries}
\subsection{Generalities}
The state of a pure quantum system is represented by a unit vector in a Hilbert space. Quantum systems  are denoted $A, B, C\dots$ and are identified with their corresponding Hilbert spaces. The Hilbert spaces we consider here will be  $n$-qubits spaces of the form $(\CC^2)^{\otimes n}$. 
A density operator is a Hermitian positive semidefinite operator with unit trace. The density operator associated with a pure state is abbreviated by omitting the ket and bra $\psi \eqdef \proj{\psi}$. 
For an introduction to quantum information, we refer the reader to \cite{NC00, Wil11}.


Throughout the paper, we use the Pauli basis to decompose operators. The $2 \times 2$ Pauli operators can be represented as follows:
\[
\sigma_0 = \left(
\begin{array}{cc}
1 & 0 \\
0 & 1 \\
\end{array}
\right) \qquad
\sigma_1 = \left(
\begin{array}{cc}
0 & 1 \\
1 & 0 \\
\end{array}
\right)
\]
\[
\sigma_2 = \left(
\begin{array}{cc}
0 & -i \\
i & 0 \\
\end{array}
\right)
\qquad
\sigma_3 = \left(
\begin{array}{cc}
1 & 0 \\
0 & -1 \\
\end{array}
\right).
\]
For a string $\nu \in \{0,1,2,3\}^n$, we define $\sigma_{\nu} = \sigma_{\nu_1} \otimes \cdots \otimes \sigma_{\nu_n}$. The support $\supp(\nu)$ of $\nu$ is simply the subset $\{i \in [n] : \nu_i \neq 0\}$ and the weight $w(\nu) = |\supp(\nu)|$. Any operator acting on $(\CC^2)^{\otimes n}$ can be represented as
\[
T = \frac{1}{2^n} \sum_{\nu \in \{0,1,2,3\}} \tr[\sigma_{\nu}T] \sigma_{\nu}.
\]

\subsection{Quantum error correcting codes}
As we are interested in the encoding complexity of quantum error correcting codes, we describe codes using the encoding operation. The encoding operation is a unitary transformation on an $n$-qubit space that we decompose as $A \otimes B$. It takes as input a $k$-qubit state $\ket{\psi} \in A$ (the state to be encoded) and the remaining input qubits are set to $\ket{0}^{\otimes n-k} \in B$. Such a code is called an $[n,k]$ quantum error correcting code. See Figure \ref{fig:encoding} for an illustration.

	\begin{figure}[h]
	\begin{center}
		\includegraphics[scale=0.7]{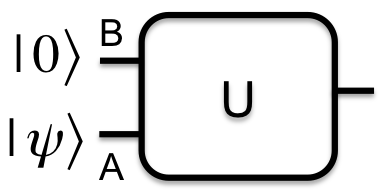}
		\caption{Encoding the state $\ket{\psi}$ using the encoder $U$}
		\label{fig:encoding}
	\end{center}
	\end{figure}

 For an encoding unitary $U$, the code space is defined as the vector space $\{ U \ket{\psi} \ket{0}^{\otimes n-k} : \ket{\psi} \in (\CC^2)^{\otimes k}\}$. By considering a basis $\{\ket{x} : x \in \{0,1\}^k\}$ of $A$, we obtain a basis $\{\ket{\bar{x}} = U \ket{x} \otimes \ket{0}^{\otimes n-k} : x \in \{0,1\}^{n-k}\}$ of the code space. A code has distance at least $d+1$ if we have for all $x, y \in \{0,1\}^k$ and all $\mu \in \{0,1,2,3\}^n$ with $1 \leq w(\mu) \leq d$
\begin{equation}
\label{eq:def-distance}
\bra{\bar{x}} \sigma_{\mu} \ket{\bar{y}} = C_{\mu} \delta_{xy}
\end{equation}
for some real numbers $C_{\mu}$ depending only on $\mu$ and not on $x,y$ \cite{BDSW96, KL97}. Here, $\delta_{xy} = 1$ if $x=y$ and zero otherwise. 
A code with minimum distance $2e+1$ can correct $e$ errors.

We can describe a unitary transformation (and more generally any superoperator) by describing how it acts on the Pauli basis, i.e., $U \sigma_{\nu} U^{\dagger}$ for all $\nu \in \{0,1,2,3\}^n$. This is especially convenient when talking about stabilizer codes for which the encoding operation $U$ belongs to the Clifford group. The Clifford group $\cC_n$ is defined as the set of unitary transformation $U$ under which the Pauli basis remains invariant up to phases. More precisely $\cC_n = \left\{ U : U \sigma_{\nu} U^{\dagger} \in \pm \{\sigma_{\mu}, \mu \in \{0,1,2,3\}^n\} \right\}$. Note that any unitary in $\cC_n$ can be implemented using Hadamard gates, phase gates and CNOT gates. When the encoding unitary $U \in \cC_n$, the minimum distance can be simply characterized as shown by the following proposition.
\begin{proposition}
\label{prop:distance}
A unitary $U \in \cC_n$ defines a quantum error correcting code of distance at least $d+1$ if and only if $d$ is the largest integer such that for all $\nu_A \in \{0,1,2,3\}^k - \{0\}$ and $\nu_B \in \{0,3\}^{n-k}$ and all $\mu \in \{0,1,2,3\}^n$ of weight $1 \leq w(\mu) \leq d$
\begin{equation}
\label{eq:no-small-paulis}
\tr[\sigma_{\mu} U \sigma_{\nu_A} \otimes \sigma_{\nu_B} U^{\dagger}] = 0.
\end{equation}
\end{proposition}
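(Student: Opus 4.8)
The plan is to argue directly from the definition \eqref{eq:def-distance}, turning the Clifford property of $U$ into a factorization of the matrix elements $\bra{\bar{x}}\sigma_{\mu}\ket{\bar{y}}$ across the cut $A\otimes B$. Since $\ket{\bar{x}} = U\ket{x}\ket{0}^{\otimes n-k}$ and $U$ is unitary,
\[
\bra{\bar{x}}\sigma_{\mu}\ket{\bar{y}} = \bra{x}\bra{0} U^{\dg}\sigma_{\mu} U \ket{y}\ket{0},
\]
where $\bra{0},\ket{0}$ refer to the register $B$ initialized to $\ket{0}^{\otimes n-k}$. Because $U\in\cC_n$, the operator $U^{\dg}\sigma_{\mu}U$ is a single Pauli string up to sign; I would write $U^{\dg}\sigma_{\mu}U = s_{\mu}\,\sigma_{\tilde{\mu}}$ with $s_{\mu}\in\{\pm1\}$ (the coefficient is real because both $\sigma_{\mu}$ and $U^{\dg}\sigma_{\mu}U$ are Hermitian, which is also what forces the $C_{\mu}$ in \eqref{eq:def-distance} to be real) and $\tilde{\mu}\in\{0,1,2,3\}^n$. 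Splitting $\tilde{\mu}=(\tilde{\mu}_A,\tilde{\mu}_B)$ along $A\otimes B$, the element factorizes as
\[
\bra{\bar{x}}\sigma_{\mu}\ket{\bar{y}} = s_{\mu}\,\bra{x}\sigma_{\tilde{\mu}_A}\ket{y}\,\bra{0}\sigma_{\tilde{\mu}_B}\ket{0}.
\]

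Next I would evaluate the two factors and determine exactly when the product has the form $C_{\mu}\delta_{xy}$ required by \eqref{eq:def-distance}. The register-$B$ factor $\bra{0}\sigma_{\tilde{\mu}_B}\ket{0}$ equals $1$ when $\tilde{\mu}_B\in\{0,3\}^{n-k}$, since $\sigma_0$ and $\sigma_3=Z$ both fix $\ket{0}$, and vanishes otherwise, since any $\sigma_1$ or $\sigma_2$ factor maps $\ket{0}$ to $\ket{1}$. For the register-$A$ factor I would distinguish three cases: if $\tilde{\mu}_A=0$ it is exactly $\delta_{xy}$; if $\tilde{\mu}_A$ carries an $X$- or $Y$-component it has nonzero off-diagonal entries; and if $\tilde{\mu}_A\in\{0,3\}^k\setminus\{0\}$ it is diagonal but equal to $(-1)^{\sum_{i\in\supp(\tilde{\mu}_A)}x_i}$, hence not constant in $x$. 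Combining the two factors, the condition $\bra{\bar{x}}\sigma_{\mu}\ket{\bar{y}}=C_{\mu}\delta_{xy}$ therefore fails for a given $\mu$ precisely when $\tilde{\mu}_A\neq0$ and $\tilde{\mu}_B\in\{0,3\}^{n-k}$, and it holds (with $C_{\mu}=0$ or $C_{\mu}=\pm1$) in every other case.

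Finally I would match this failure condition to the trace condition \eqref{eq:no-small-paulis}. Distinct Pauli strings are orthogonal under the trace and $U(\cdot)U^{\dg}$ sends Paulis to $\pm$ Paulis, so $\tr[\sigma_{\mu}\,U\sigma_{\nu_A}\ox\sigma_{\nu_B}U^{\dg}]\neq0$ holds if and only if $U^{\dg}\sigma_{\mu}U=\pm\sigma_{\nu_A}\ox\sigma_{\nu_B}$, i.e. $\tilde{\mu}_A=\nu_A$ and $\tilde{\mu}_B=\nu_B$. Thus the existence of some $\nu_A\neq0$ and $\nu_B\in\{0,3\}^{n-k}$ with a nonzero trace is exactly the failure condition isolated above. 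Quantifying over all $\mu$ with $1\le w(\mu)\le d$, this gives the equivalence: \eqref{eq:no-small-paulis} holds for all such $\mu$ if and only if \eqref{eq:def-distance} holds for all such $\mu$, i.e. the code has distance at least $d+1$; reading off the largest admissible $d$ then identifies the exact minimum distance. I expect the only genuinely delicate step to be the register-$A$ case analysis: one must notice that a diagonal but nontrivial $\sigma_{\tilde{\mu}_A}$ still violates \eqref{eq:def-distance}, not because it is off-diagonal but because its diagonal sign depends on $x$, so the required $x$-independence of $C_{\mu}$ breaks. Everything else reduces to orthogonality of Pauli strings and bookkeeping with the factorization across $A\otimes B$.
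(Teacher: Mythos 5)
Your proof is correct, but it takes a genuinely different route from the paper's. You exploit the group structure of $\cC_n$ upfront: pulling $\sigma_{\mu}$ \emph{back} through the encoder gives $U^{\dagger}\sigma_{\mu}U = \pm\sigma_{\tilde{\mu}}$, every matrix element then factorizes across $A \otimes B$, and a short case analysis on $(\tilde{\mu}_A, \tilde{\mu}_B)$ identifies exactly when \eqref{eq:def-distance} fails; matching that failure set against Pauli orthogonality handles both directions at once, per $\mu$, and even pins down $C_{\mu} \in \{0, \pm 1\}$. The paper instead pushes $\sigma_{\nu_A}\otimes\sigma_{\nu_B}$ \emph{forward}: for the ``if'' part it expands $\ketbra{y}{x}$ and $\proj{0}^{\otimes n-k}$ in the Pauli basis and invokes only linearity of the trace---notably, that direction never uses the Clifford property, so the paper's argument actually shows that condition \eqref{eq:no-small-paulis} forces distance at least $d+1$ for an arbitrary unitary encoder, with general real $C_{\mu}$; the Clifford hypothesis enters only in the ``only if'' part, where the terms $\tr[\sigma_{\mu} U \sigma_{\nu_A} \otimes \sigma_{\nu_B} U^{\dagger}]$ over $\nu_B \in \{0,3\}^{n-k}$ are each $0$ or $\pm 2^n$ with at most one nonzero, so the vanishing of their sum kills them individually. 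Your version buys more structural information in the Clifford case (an explicit $\tilde{\mu}$, explicit values of $C_{\mu}$, and the correct emphasis on the delicate case $\tilde{\mu}_A \in \{0,3\}^k \setminus \{0\}$, where the element is diagonal but the sign $(-1)^{\sum_{i \in \supp(\tilde{\mu}_A)} x_i}$ depends on $x$); the paper's version buys a more general ``if'' direction and avoids the case analysis. The only implicit point you should make explicit is $k \geq 1$, so that the $x$-dependent sign genuinely takes both values---a cosmetic remark, not a gap.
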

\begin{IEEEproof}
We start by proving the ``if'' part. We have
\begin{align}
\bra{\bar{x}} \sigma_{\mu} \ket{\bar{y}}
&= \bra{x}_A \otimes \bra{0}_B U^{\dagger} \sigma_{\mu} U \ket{y}_A \otimes \ket{0}_B \notag \\
&= \tr\left[ \sigma_{\mu} U \ketbra{y}{x} \otimes \proj{0} U^{\dagger} \right] \label{eq:ip-for-distance}
\end{align}
But
\[
\ketbra{y}{x} = \frac{\delta_{xy}}{2^k} \sigma_0 + \frac{1}{2^k}\sum_{\nu_A \in \{0,1,2,3\}^k, \nu_A \neq 0} \tr[\sigma_{\nu_A} \ket{y} \bra{x}] \sigma_{\nu_A},
\]
\begin{equation}
\label{eq:decompose-0}
\proj{0}^{\otimes n-k} = \frac{1}{2^{n-k}} \sum_{\nu_B \in \{0,3\}^{n-k}} \sigma_{\nu_B}.
\end{equation}
Plugging these expressions into \eqref{eq:ip-for-distance}, this shows that if condition \eqref{eq:no-small-paulis} holds, then condition \eqref{eq:def-distance} holds with $C_{\mu} = \tr\left[ \sigma_{\mu} U \frac{\sigma_{0}}{2^k} \otimes \proj{0} U^{\dagger} \right]$.

For the ``only if'' part, assume $U$ satisfies the condition \eqref{eq:def-distance}. Write
\[
\sigma_{\nu_A} = \sum_{x,y \in \{0,1\}^{k}} \sigma_{\nu_A}(x,y) \ketbra{x}{y}.
\]
Thus, if $1 \leq w(\mu) \leq d$,
\begin{align*}
&\tr[\sigma_{\mu} U \sigma_{\nu_A} \otimes \proj{0} U^{\dagger}] \\
&= \sum_{x,y \in \{0,1\}^{k}} \sigma_{\nu_A}(x,y) \tr\left[\sigma_{\mu} U \ketbra{x}{y} \otimes \proj{0} U^{\dagger}\right] \\
&= \sum_{x,y \in \{0,1\}^{k}} \sigma_{\nu_A}(x,y) C_{\mu} \delta_{xy} \\
&= 0,
\end{align*}
because $\tr[\sigma_{\nu_A}] = \sum_{x} \sigma_{\nu_A}(x,x) = 0$. Moreover, recall \eqref{eq:decompose-0} and note that  $U$ transforms Pauli operators to Pauli operators. As a result, we have that $\tr[\sigma_{\mu} U \sigma_{\nu_A} \otimes \proj{0} U^{\dagger}] = 0$ implies that for all $\nu_B \in \{0,3\}^{n-k}$, $\tr[\sigma_{\mu} U \sigma_{\nu_A} \otimes \sigma_{\nu_B} U^{\dagger}] = 0$.
\end{IEEEproof}
Consider a Pauli string $\nu_A \nu_B$ of weight for example $1$. A two-qubit gate can increase the weight of this Pauli string by at most $1$ and thus any code should have at least as many gates as its distance. Also as the weight of a Pauli string can be multiplied by at most two by a set of two-qubit gates acting on disjoint qubits, the depth of the encoding should be at least the logarithm of the distance.


\subsection{Random quantum circuits}
\label{sec:prelim-rqc}
We consider the following simple model for a quantum circuit acting on $n$ qubits. In a sequential random quantum circuit, a random two-qubit gate is applied to a randomly chosen pair of qubits in each time step. Here, the random two-qubit gate is going to be a random Clifford gate in $\cC_2$ acting on two qubits. 

A model of random circuits of a certain size defines a measure over unitary transformations on $n$ qubits that we call $p_{\circuit}$. 
The second-order moment operator will play an important role in all our proofs. The second-order moment operator is a superoperator acting on two copies of the space of operators acting on the ambient Hilbert space, which is an $n$-qubit space in our setting. For a measure $p$ over the unitary group, we can define the second moment operator $\operatorname{M}_p$ as
\[
\operatorname{M}_p[X \otimes Y] = \exc{U \sim p}{UXU^\dagger \otimes UYU^\dagger}.
\]
%
%
%
%
%
Even though we do not use the notion unitary designs here, it is worth pointing out that a distribution $p$ over unitary transformations is called a two-design if $\operatorname{M}_p = \Mclifford$ where $\clifford$ is the uniform distribution over the Clifford group \cite{DCEL09}.  We denote by $\Mcirc$ the moment operator for the distribution obtained by applying one step of the random circuit. It is possible to compute $\Mcirc$ explicitly when a random Clifford gate is applied to a randomly chosen pair $i,j$ of qubits, see e.g., \cite[Section 3.2]{HL09}. We have
\[
\Mcirc = \frac{1}{n(n-1)} \sum_{i \neq j} \operatorname{m}_{ij},
\]
where $\operatorname{m}_{ij}$ only acts on qubits $i$ and $j$ and is defined by
\[
\operatorname{m}_{ij}[\sigma_{\mu} \otimes \sigma_{\mu'}] = \left\{
\begin{array}{ll}
0 & \text{if } \mu \neq \mu' \\
\sigma_{0} \otimes \sigma_{0} & \text{if } \mu = \mu' = 0 \\
\frac{1}{15} \sum\limits_{\nu \in \{0,1,2,3\}^2, \nu \neq 0} \sigma_{\nu} \otimes \sigma_{\nu} &\text{if } \mu = \mu' \neq 0 \\
\end{array}
\right.
\]
for all $\mu, \mu' \in \{0,1,2,3\}^2$. We can thus represent the operator $\Mcirc$ in the Pauli basis using the following $4^n \times 4^n$ matrix
\[
Q(\mu, \nu) = \frac{1}{4^n} \tr\left[\sigma_{\nu} \otimes \sigma_{\nu} \Mcirc[\sigma_{\mu} \otimes \sigma_{\mu}] \right].
\]
In fact, it is simple to verify that $\sum_{\nu \in \{0,1,2,3\}^n} Q(\mu, \nu) = 1$ for all $\mu$ and so $Q$ can be seen as a transition matrix for a Markov chain over the Pauli strings of length $n$.

Now for a random circuit with $t$ independent random gates applied sequentially, the second moment operator is simply $\Mcirc^t$ and the corresponding matrix in the Pauli basis is also the $t$-th power of $Q$. The properties we are interested in can be expressed as quadratic functions of the entries of unitary transformation defined by the circuit and thus can be computed from the second moment operator. This means that these properties can be completely reduced to studying the evolution of the Markov chain defined by $Q$.

\section{Encoding using random quantum circuits}
\subsection{Sequential circuit}
The objective of this section is to prove the main result of the paper: a typical circuit with $t = O(n \log^2 n)$ gates defines an encoding into a quantum error correcting code with distance that is basically as good as for random Clifford unitaries. Let $\binent$ be the binary entropy function defined by $\binent(x) = -x \log_2 x - (1-x) \log_2(1-x)$.

\begin{theorem}[Good codes with almost linear-size circuits]
\label{thm:seq-good-codes}
For any $\delta > 0$, there exists a constant $c$ such that a random quantum circuit with $c n \log^2 n$ gates defines an $[n, k]$ quantum error correcting code with distance at least $d+1$ with probability at least
\[
1 - \frac{1}{n^8} - 2^{k- n \left(1 - \binent(d/n) - \log_2 (3) d/n - 3 \delta \right)}.
\]
\end{theorem}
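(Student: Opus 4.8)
### Proof strategy

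The plan is to reduce the distance property to a statement about the Markov chain $Q$ via a union bound, and then to control the relevant transition probabilities by a direct (spectral-gap-free) analysis of the chain. By Proposition~\ref{prop:distance}, the code fails to have distance $d+1$ precisely when there exist $\nu_A \in \{0,1,2,3\}^k - \{0\}$, $\nu_B \in \{0,3\}^{n-k}$, and $\mu$ with $1 \le w(\mu) \le d$ such that $\tr[\sigma_{\mu} U \sigma_{\nu_A} \otimes \sigma_{\nu_B} U^{\dagger}] \neq 0$. The natural figure of merit is the expected number of such ``bad'' low-weight Pauli operators produced from a fixed high-weight input string. First I would fix an input string $\nu = \nu_A \nu_B$ with $w(\nu) \ge n-k$ (since $\nu_A \neq 0$ and $\nu_B \in \{0,3\}^{n-k}$ forces a large support on the $B$ block is \emph{not} guaranteed, so more care is needed: the support on $B$ can be small, but the combined string is nonzero on $A$), and estimate the probability over the random circuit that $U \sigma_{\nu} U^{\dagger}$ has nonzero overlap with some $\sigma_{\mu}$ of weight $\le d$. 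Writing this expectation in terms of the second moment operator, the quantity $\ex{|\tr[\sigma_{\mu} U \sigma_{\nu} U^\dagger]|^2}$ is exactly $4^n \, Q^t(\nu,\mu)$, so the whole problem becomes: bound $\sum_{\mu : w(\mu) \le d} Q^t(\nu,\mu)$ summed over all admissible inputs $\nu$.

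The key technical step is to bound $Q^t(\nu,\mu)$ — the probability that the Markov chain started at a string $\nu$ of weight $\ell$ reaches a string $\mu$ of weight $m \le d$ after $t = cn\log^2 n$ steps — by something like a function of $\ell$ and $m$ alone, with enough decay to beat the entropy of the number of low-weight targets. I would first establish a one-step drift estimate for the weight under the chain $Q$: applying $\operatorname{m}_{ij}$ to a pair of qubits tends to spread weight, so the weight is a submartingale-like quantity that is unlikely to be small once $t$ is large. The clean way to package this is to track the probability of the event that the weight ever drops to $m$; I expect a bound of the form
\begin{equation}
\sum_{\mu : w(\mu) = m} Q^t(\nu, \mu) \;\le\; \binom{n}{m} 3^m \, 2^{-(\text{something growing in } t)} ,
\end{equation}
where $\binom{n}{m}3^m$ counts the low-weight targets and the decay factor comes from the circuit having thoroughly mixed the weight upward. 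The number $t = cn\log^2n$ should be exactly the threshold at which this decay dominates the combinatorial prefactor for all $m \le d$, with $c$ chosen large enough (depending on $\delta$) to absorb the $3\delta$ slack in the exponent.

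I would then assemble the final probability estimate by a two-part union bound. The $1/n^8$ term should arise from a ``bad event'' on the circuit structure — most likely the event that the circuit fails to spread weight as the typical-case analysis predicts (e.g. some input string's weight fails to grow within the allotted time), controlled by a polynomial-in-$n$ tail. Conditioned on the complement of this event, the expected number of bad $(\nu_A, \nu_B, \mu)$ triples is bounded by the sum over inputs of the transition bound above; using $w(\nu) \ge$ (large) on the relevant inputs and summing $\binom{n}{m}3^m$ over $m \le d$, which is dominated by $2^{n(\binent(d/n) + \log_2(3)\, d/n)}$, and multiplying by the number $\approx 2^k$ of input strings, gives the claimed $2^{k - n(1 - \binent(d/n) - \log_2(3)d/n - 3\delta)}$ via Markov's inequality. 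The main obstacle, and the place where the paper's novelty must lie, is the refined transition bound: the spectral gap of $Q$ is too weak to give mixing in $O(n\log^2 n)$ steps (it would need $\Omega(n^2)$), so I cannot simply invoke convergence to the stationary distribution. Instead the heart of the argument is a careful, state-dependent analysis of how $Q$ moves probability between weight classes, exploiting that reaching a \emph{specific} low-weight string is far rarer than generic mixing would suggest — this is the step I expect to consume most of the work.
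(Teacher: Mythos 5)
Your reduction is the paper's: the identity $\pr{\tr[\sigma_{\mu} U_t \sigma_{\nu}U_t^{\dagger}] \neq 0} = Q^t(\nu,\mu)$ (using that the trace is $0$ or $\pm 2^n$ for Clifford $U_t$) and the projection onto the weight chain $P(\ell,m)$ of \eqref{eq:def-p} match the proof exactly. But the key bound you propose has a genuine gap: a bound of the form $\sum_{\mu: w(\mu)=m} Q^t(\nu,\mu) \le \binom{n}{m}3^m\, 2^{-g(t)}$, uniform in the starting weight $\ell = w(\nu)$, cannot both be true and suffice. From $\ell=1$ the chain stays at weight $1$ for all $t = cn\log^2 n$ steps with probability $\left(1-\frac{6}{5n}\right)^t = 2^{-\Theta(\log^2 n)}$, which forces $g(t) = O(\log^2 n)$; yet your union bound over all admissible inputs --- of which there are $(4^k-1)2^{n-k} \approx 2^{n+k}$, not ``$\approx 2^k$'' as you write --- needs per-class mass at scale $4^{-(1-\delta)n}$ to produce the theorem's main term. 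These are incompatible. The paper's resolution (Theorem~\ref{thm:mc-convergence}, imported from [BF13]) is a \emph{two-term, starting-point-dependent} bound, $P^t(\ell,m) \le 4^{\delta n}\binom{n}{m}3^m/(4^n-1) + (3-\eta)^{-\ell}\binom{n}{\ell}^{-1}n^{-10}$: near-stationary behavior inflated by $4^{\delta n}$, plus a residual decaying in $\ell$ that accounts for low-weight strings lingering. You gesture at state-dependence in prose, but your displayed bound drops it, and it is exactly what the argument lives on.

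Relatedly, your assembly step ``using $w(\nu) \ge$ (large) on the relevant inputs'' is false --- $\nu_A$ a single-qubit Pauli with $\nu_B = 0$ is admissible and has weight $1$ --- and these low-weight inputs are precisely the dangerous ones (you noticed this caveat early on but never resolved it). The paper handles them by counting admissible strings of weight $\ell$: $\sum_{p}\binom{k}{p}3^p\binom{n-k}{\ell-p} \le (\ell+1)\,3^{\lambda\ell+1}\binom{n}{\ell}$ with $\lambda = \frac{3k}{n+2k} < 1$, which is where the restriction $\nu_B \in \{0,3\}^{n-k}$ (no factor $3$ on the $B$ block) and the reduction to $k \le (1-\delta)n$ enter; choosing $\eta$ so that $3^{\lambda\ell} < (3-\eta)^{\ell}$ makes the residual terms sum, over $\ell \le n$ and $m \le d$, to the $n^{-8}$ in the statement. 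Consequently your attribution of the $1/n^8$ to a circuit-level ``bad event'' plus Markov's inequality is also off: for a Clifford circuit each input Pauli's weight trajectory is a deterministic function of the circuit, so no single polynomially-rare structural event can cover all exponentially many inputs simultaneously --- one must union-bound per input anyway, and the $n^{-8}$ is simply the summed residual of that union bound, not a conditioning step.
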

\begin{proof}
We prove this result by using Proposition \ref{prop:distance}. Let $U_t$ denote the random unitary computed by choosing $t$ random gates. We can bound the probability that $U_t$ fails to satisfy condition \eqref{eq:no-small-paulis}.
\begin{align*}
&\pr{\exists \nu_A, \nu_B, \mu : w(\mu) \in [1,d],  \tr[\sigma_{\mu} U_t \sigma_{\nu_A} \otimes \sigma_{\nu_B} U_t^{\dagger}] \neq 0 } \\
&\leq \sum_{\nu_A, \nu_B} \sum_{\mu : w(\mu) \in [1,d]} \pr{\tr[\sigma_{\mu} U_t \sigma_{\nu_A} \otimes \sigma_{\nu_B} U_t^{\dagger}] \neq 0 }.
\end{align*}
Note that because $U_t \in \cC_n$, we have $\tr[\sigma_{\mu} U_t \sigma_{\nu_A} \otimes \sigma_{\nu_B} U_t^{\dagger}] = \pm 2^n$ when it is non-zero. This means that
\begin{align*}
&\pr{\tr[\sigma_{\mu} U_t \sigma_{\nu_A} \otimes \sigma_{\nu_B} U_t^{\dagger}] \neq 0 } \\
&= \frac{1}{2^n} \ex{\left|\tr[\sigma_{\mu} U_t \sigma_{\nu_A} \otimes \sigma_{\nu_B} U_t^{\dagger}] \right|} \\
&= \frac{1}{2^{2n}} \ex{\tr\left[\sigma_{\mu}^{\otimes 2} U_t^{\otimes 2} (\sigma_{\nu_A} \otimes \sigma_{\nu_B})^{\otimes 2} (U_t^{\dagger})^{\otimes 2} \right]} \\
&= \frac{1}{4^{n}} \tr\left[\sigma_{\mu}^{\otimes 2} \Mcirc^t\left[(\sigma_{\nu_A} \otimes \sigma_{\nu_B})^{\otimes 2}\right] \right] \\
&= Q^t(\nu, \mu),
\end{align*}
where $\nu = \nu_A \nu_B$ is the concatenation of $\nu_A$ and $\nu_B$. As a result, the probability that condition \eqref{eq:no-small-paulis} fails to hold can be bounded by
\begin{align}
&\sum_{\nu_A \in \{0,1,2,3\}^k, \nu_B \in \{0,3\}^{n-k}} \sum_{\mu: w(\mu) \in [1,d]} Q^t(\nu, \mu) \notag \\
&\quad = \sum_{\ell = 1}^{n} \sum_{\substack{\nu_A \in \{0,1,2,3\}^k, \nu_B \in \{0,3\}^{n-k} \\ w(\nu_A\nu_B) = \ell}} \sum_{m=1}^d \sum_{\mu: w(\mu) \in [1,d]} Q^t(\nu, \mu) \notag
 \\
&\quad = \sum_{\ell=1}^n \sum_{p=0}^{\ell} \binom{k}{p} 3^p \binom{n-k}{\ell - p} \sum_{m=1}^d P^t(\ell, m),
\label{eq:sum-probabilities}
\end{align}
where we defined the matrix $P(\ell, m) = \sum_{\mu : w(\mu) = m} Q(\nu, \mu)$ for any $\nu$ of weight $\ell$. It is not hard to see that this expression is independent of $\nu$. $P$ can be considered as the transition matrix of a Markov chain on $\{1, \dots, n\}$ and in fact, the probabilities $P(\ell, m)$ can be computed exactly (see \cite{HL09}):
\begin{equation}
\label{eq:def-p}
P(\ell,m) = \left\{
\begin{array}{ll}
1- \frac{2\ell(3n-2\ell-1)}{5n(n-1)} & \text{ if } m=\ell \\
\frac{2\ell(\ell-1)}{5n(n-1)} & \text{ if } m =\ell - 1\\
\frac{6\ell(n-\ell)}{5n(n-1)} & \text{ if } m = \ell + 1\\
0 & \text{ otherwise.}
\end{array} \right.
\end{equation}
In order to evaluate the expression in \eqref{eq:sum-probabilities}, we analyze the behaviour of the Markov chain when it runs for $O(n \log^2 n)$ steps. One possible route to bounding this expression would be to compute the mixing time of the chain defined by $P$. The stationary distribution for this chain is quite simple and it is defined as
\[
P_{\clifford}(m) = \frac{\binom{n}{m} 3^m}{4^n - 1}.
\]
The problem with this approach is that it only gives a useful bound whenever $t = \Omega(n^2)$. In fact, this is what is done in \cite{HL09}, which proves that random Clifford circuits with $O(n^2)$ gates are approximate two-designs. To prove our result for circuits of almost linear size, we need to analyze the chain more carefully. In short, the problem with the mixing time is that it is about the worst case starting point. For example, starting at the state $\ell =1$, it takes more time for the walk to mix, whereas if you start near the state $3n/4$, the distribution is already almost mixed. But an important point to realize is that the number of Pauli strings of low weight is small. So it is not necessary for the bound on $P^t(\ell, m)$ to be equally good for all $\ell$. And in fact by analyzing the Markov chain precisely, one can prove the following almost optimal bounds.
\begin{theorem}[\cite{BF13}]
\label{thm:mc-convergence}
Let $P$ be the transition matrix of the Markov chain defined in \eqref{eq:def-p}. For any constants $\delta \in (0,1/4), \eta \in (0,1)$, there exists a constant $c$ such that for $t \geq c n \log^2 n$ and all integers  $1 \leq \ell \leq n$ and $1 \leq m \leq 3n/4$, we have for large enough $n$
\begin{equation}
\label{eq:prob-ell-k}
P^t(\ell, m) \leq 4^{\delta n} \cdot \frac{\binom{n}{m} 3^m}{4^n - 1} + \frac{1}{(3-\eta)^{\ell} \binom{n}{\ell}}\frac{1}{n^{10}}.
\end{equation}
\end{theorem}
Let us state some remarks about this theorem. If instead of a random circuit, we were to apply a completely random Clifford unitary on $n$ qubits, we would have transition probabilities that are independent of $\ell$:
\[
P_{\clifford}(\ell, m) = \frac{\binom{n}{m} 3^m}{4^n - 1}.
\]
Also the dependence in $\ell$ is close to optimal. Starting at state $\ell$, there is a probability of roughly $\frac{1}{3^{\ell} \binom{n}{\ell}}$ to get back to the state $1$. And in state $1$, there is a probability of $2^{-O(\log^2 n)}$ of staying there for $t = O(n \log^2 n)$ steps.

In order to bound expression \eqref{eq:sum-probabilities}, we first evaluate $\sum_{p=0}^{\ell} \binom{k}{p} 3^p \binom{n-k}{\ell - p}$. We bound this sum simply by finding the $p$ for which it is maximum.
In order to do so, note that the ratio
\[
\frac{\binom{k}{p+1} 3^{p+1} \binom{n-k}{\ell - p-1}}{\binom{k}{p} 3^p \binom{n-k}{\ell - p}} = 3 \frac{(k-p) (\ell-p)}{(p+1)(n-k+p-\ell+1)}
\]
is a decreasing function of $p$. Moreover, if we plug $p = \lambda\ell$ with $\lambda = \frac{3k}{n+2k}$ in this expression, we obtain 
\[
\frac{6 k^2 \ell-9 k \ell^2+3 k \ell n}{4 k^2-2 k \ell+6 k^2 \ell-3 k \ell^2+4 k n-\ell n+3 k \ell n+n^2} \leq 1.
\]
%
%
This means that the maximum occurs for some $p_{\max} \leq \ceil{\lambda \ell}$. As a result, we can bound the sum by
\begin{align*}
\sum_{p=0}^{\ell} \binom{k}{p} 3^p \binom{n-k}{\ell - p} &\leq (\ell + 1) \cdot 3^{p_{\max}} \binom{k}{p_{\max}} \binom{n-k}{\ell-p_{\max}} \\
&\leq (\ell+1) \cdot 3^{\lambda \ell + 1} \sum_{p=0}^{\ell} \binom{k}{p} \binom{n-k}{\ell - p} \\
&= (\ell+1) \cdot 3^{\lambda \ell + 1} \binom{n}{\ell}.
\end{align*}
Note that if $k > (1-\delta) n$, there is nothing to prove because the probability bound given by the theorem is negative.
Assuming $k \leq (1-\delta)n$ and choosing $\eta$ appropriately small to apply Theorem \ref{thm:mc-convergence}, we obtain
\[
\sum_{p=0}^{\ell} \binom{k}{p} 3^p \binom{n-k}{\ell - p} \frac{1}{(3-\eta)^{\ell} \binom{n}{\ell}}\frac{1}{n^{10}}  \leq \frac{1}{n^{10}}.
\]
Also observe that
\[
\sum_{\ell=1}^n \sum_{p=0}^{\ell} \binom{k}{p} 3^p \binom{n-k}{\ell - p} \leq 2^{n+k} - 1.
\]
Combining this with \eqref{eq:prob-ell-k} and plugging this into \eqref{eq:sum-probabilities}, we obtain
\begin{align}
&\sum_{\ell=1}^n \sum_{p=0}^{\ell} \binom{k}{p} 3^p \binom{n-k}{\ell - p} \sum_{m=1}^d P^t(\ell, m) \notag \\
&\leq \sum_{\ell=1}^n \sum_{m=1}^d (2^{n+k} - 1) \cdot 4^{\delta n} \cdot \frac{\binom{n}{m} 3^m}{4^n - 1} + \frac{1}{n^{10}} \notag \\
&\leq \frac{1}{n^8} + n 4^{\delta n} 2^{k-n} \sum_{m=1}^d \binom{n}{m} 3^m \notag \\
&\leq \frac{1}{n^8} + 2^{k- n \left(1 - \binent(d/n) - \log_2 (3) d/n - 3 \delta \right)} \label{eq:prob-failure}
\end{align}
for sufficiently large $n$.
\end{proof}


\subsection{Parallelizing the circuit}
In the previous section, the complexity measure for a circuit was the number of two-qubit gates applied. Another important complexity measure for circuits is the depth. The depth is related to the time complexity, or the number of time steps needed in order to execute the circuit. Note that two gates acting on disjoint qubits could in fact be executed simultaneously. For the problem of finding good error correcting codes, a natural question is how small can the depth of encoding circuits be. In this section, we show that by parallelizing the random quantum circuits considered in the previous section, we obtain with high probability circuits with depth $O(\log^3 n)$. Using Theorem \ref{thm:seq-good-codes}, this proves that typical circuits of polylogarithmic depth define quantum error correcting codes that achieve a distance that is basically as good as the distance of a random stabilizer code.

To construct the parallelized circuit, one keeps adding gates to the current level until there is a gate that shares a qubit with a previously added gate in that level, in which case create a new level and continue. In the following proposition, we prove that by parallelizing a random circuit on $n$ qubits having $t$ gates we obtain with high probability a circuit of depth $O(\frac{t}{n} \log n)$.

\begin{proposition}[\cite{BF13}]
\label{prop:parallelization}
Consider a random sequential circuit composed of $t$ gates where $t$ is a polynomial in $n$. Then parallelize the circuit as described above. Except with probability $n^{-10}$, the resulting circuit has depth at most $O\left(\frac{t}{n} \log n\right)$.
\end{proposition}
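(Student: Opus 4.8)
The plan is to replace the depth of the parallelized circuit by a purely combinatorial quantity and then control that quantity with a first-moment (union bound) estimate. Write the gates as $g_1,\dots,g_t$ in the order in which they are applied, so that $g_j$ acts on an independent, uniformly random pair of qubits. In the layering, a gate can only be pushed to a later level by an \emph{earlier} gate sharing one of its two qubits, so the level assigned to any gate is at most one more than the largest level among the earlier gates meeting it. Unwinding this recursion, the depth is bounded by the length $L$ of the longest \emph{chain} of indices $i_1<i_2<\cdots<i_L$ in which every consecutive pair $g_{i_r},g_{i_{r+1}}$ acts on a common qubit. Hence it suffices to show that, except with probability $n^{-10}$, no chain has length $L=\Ord{\frac{t}{n}\log n}$. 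I will assume $t\ge n$, which is the regime of interest (for the application $t=\Ord{n\log^2 n}$), so that the target bound exceeds $1$.

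Next I would bound the probability that a chain of a fixed length $L$ exists by a union bound over the $\binom{t}{L}$ choices of the index set $i_1<\cdots<i_L$. For a fixed index set I reveal the gates $g_{i_1},\dots,g_{i_L}$ one at a time. Given the edge $g_{i_r}$, the event that the \emph{fresh} gate $g_{i_{r+1}}$ meets one of its two qubits has probability exactly $\frac{2n-3}{\binom{n}{2}}\le \frac{4}{n-1}$, independently of everything revealed before, since $g_{i_{r+1}}$ is independent of the earlier gates. Multiplying the $L-1$ links and summing over index sets gives
\[
\pr{\text{a chain of length } L \text{ exists}}\;\le\;\binom{t}{L}\left(\frac{4}{n-1}\right)^{L-1}\;\le\;\frac{n-1}{4}\left(\frac{4e\,t}{(n-1)L}\right)^{L},
\]
where I used $\binom{t}{L}\le (et/L)^L$.

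Finally I would set $L=\ceil{C\,\frac{t}{n}\ln n}$ for a large enough constant $C$. With this choice the base $\frac{4e\,t}{(n-1)L}$ is at most $e^{-1}$ once $n$ is large (this is exactly where the extra logarithmic factor is spent), so the right-hand side is at most $\frac{n-1}{4}\,e^{-L}\le n\cdot n^{-C t/n}\le n^{1-C}$, using $t\ge n$. Taking $C=11$ makes this at most $n^{-10}$, and on the complementary event the longest chain, hence the depth, is at most $L-1=\Ord{\frac{t}{n}\log n}$. The hypothesis that $t$ is polynomial in $n$ keeps $L$ polynomial and the estimate uniform.

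The step I expect to be the real obstacle is calibrating $L$ so that the \emph{entropy} of choosing the chain — the factor $\binom{t}{L}\approx (et/L)^L$ — is beaten by the per-link probability $\approx (4/n)^{L}$. Each link contributes a factor $\Theta(1/n)$ while each position offers $\Theta(t/L)$ starting gates, so the two effects nearly cancel and the product stays bounded unless $L$ carries a genuine logarithmic factor beyond $t/n$; it is precisely this balance that both forces and explains the $\log n$ in the stated bound. The other delicate point is the reduction itself: one must verify that the layering never delays a gate by more than its chain structure dictates, i.e.\ that the level recursion above faithfully upper-bounds the depth produced by the parallelization.
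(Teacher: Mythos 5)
Your proposal is correct and is essentially the argument of \cite{BF13}, to which the paper defers this proof: the depth of the layered circuit equals the length of the longest chain of time-ordered gates with consecutive overlaps, and the union bound $\binom{t}{L}\bigl(\tfrac{4}{n-1}\bigr)^{L-1}$ with $L = \Theta\bigl(\tfrac{t}{n}\log n\bigr)$ (valid since the conditional probability of each fresh link is exactly $(2n-3)/\binom{n}{2}$ regardless of earlier gates) yields the $n^{-10}$ failure probability. The one point you rightly flagged does need the interpretation you chose: the paper's informal description must be read as placing each gate at the earliest level consistent with all earlier overlapping gates (for which your recursion, hence depth $=$ longest chain, holds exactly by induction), since a strict left-to-right blocking scan that closes a level at the first conflict would have typical depth $\Theta(t/\sqrt{n})$ by a birthday-paradox argument, which is incompatible with the stated bound.
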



\begin{theorem}[Good codes with low-depth circuits]
\label{thm:good-codes}
For any constant $\delta > 0$ there exist $[n,k]$ stabilizer codes with encoding circuits of depth $O(\log^3 n)$ and size $O(n \log^2 n)$ with a distance of $d$ provided $\frac{k}{n} \leq 1 - \binent(d/n) - \log_2 (3) d/n - 4 \delta $ and $n$ is large enough.
\end{theorem}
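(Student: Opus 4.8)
The plan is to combine the two results already in hand: Theorem \ref{thm:seq-good-codes} produces a good code from a sequential circuit of size $O(n\log^2 n)$, while Proposition \ref{prop:parallelization} shows that such a circuit parallelizes to small depth. The crucial observation is that both statements concern the \emph{same} random sequential circuit, and that parallelization merely reorganizes commuting gates into time steps without changing the unitary being computed. Hence the encoded code—and in particular its distance—is left intact by the parallelization step.

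Concretely, I would first fix the given $\delta$ and invoke Theorem \ref{thm:seq-good-codes} to obtain a constant $c$ so that a random sequential circuit $U_t$ with $t = cn\log^2 n$ gates defines an $[n,k]$ code of distance at least $d+1$, except with probability at most $\frac{1}{n^8} + 2^{k - n(1 - \binent(d/n) - \log_2(3)d/n - 3\delta)}$. The hypothesis $\frac{k}{n} \leq 1 - \binent(d/n) - \log_2(3)d/n - 4\delta$ is precisely what is needed to tame the exponential term: it gives $k - n(1 - \binent(d/n) - \log_2(3)d/n - 3\delta) \leq -n\delta$, so this contribution is at most $2^{-n\delta}$, which is exponentially small. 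This is where the extra slack of $\delta$ (the jump from $3\delta$ to $4\delta$) is spent.

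Next I would apply Proposition \ref{prop:parallelization} to that same circuit. Since $t = cn\log^2 n$ is bounded by a polynomial in $n$, the proposition applies and yields, except with probability $n^{-10}$, a parallelized circuit of depth $O(\frac{t}{n}\log n) = O(\log^3 n)$, while the size remains $t = O(n\log^2 n)$. A union bound over the two failure events then gives a total failure probability of at most $\frac{1}{n^8} + 2^{-n\delta} + n^{-10}$, which is strictly less than $1$ for all sufficiently large $n$. Therefore, with positive probability, the random circuit simultaneously defines a code of distance at least $d+1$ and parallelizes to depth $O(\log^3 n)$, establishing the existence of the claimed stabilizer codes.

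There is no deep obstacle here beyond keeping the probability bookkeeping consistent across the two events. The only points demanding care are confirming that parallelization genuinely preserves the unitary (and hence the distance), and checking that $t = cn\log^2 n$ indeed lies in the polynomial regime required by Proposition \ref{prop:parallelization}; both are immediate once stated. The essence of the argument is simply that the probabilistic guarantees of Theorem \ref{thm:seq-good-codes} and Proposition \ref{prop:parallelization} hold over a common sample space, so their failure probabilities add and remain bounded away from $1$.
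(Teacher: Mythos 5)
Your proposal is correct and follows essentially the same route as the paper: invoke Theorem \ref{thm:seq-good-codes} and Proposition \ref{prop:parallelization} for the same random circuit, union-bound the two failure probabilities, and conclude by the probabilistic method that a circuit satisfying both properties exists. Your additional bookkeeping (using the slack from $3\delta$ to $4\delta$ to make the exponential term at most $2^{-n\delta}$, and noting that parallelization does not change the unitary) just makes explicit what the paper's proof leaves implicit.
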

\begin{proof}
In the proof of Theorem \ref{thm:seq-good-codes}, we saw that with probability at most $1/3$, a random quantum circuit fails to define a circuit with distance as specified in the statement. Using Proposition \ref{prop:parallelization}, the probability that a random circuit with $t = O(n \log^2 n)$ gates leads to a large depth when parallelized is also at most $1/3$. We conclude that there exists a circuit for which both of these conditions hold.
\end{proof}

\section{Conclusion}
We have shown that good quantum error correcting codes can have encoding circuits with $O(n \log^2 n)$ gates and depth $O(\log^3 n)$. It is simple to show that $\Omega(n)$ gates are needed as well as a depth of $\Omega(\log n)$. It would be interesting to determine whether these simple lower bounds are achievable. It would also be interesting to study a random circuit model with a more restricted gate set, for example with the Hadamard, phase and CNOT gate. Can we obtain the same result for this gate set?

\section*{Acknowledgements}
We would like to thank Patrick Hayden and David Poulin for helpful discussions. We would also like to thank the anonymous referees for their suggestions and in particular for pointing out \cite{MN01}. The research of WB is supported by the Centre de Recherches Math\'ematiques at the University of Montreal, Mprime, and the Lockheed Martin Corporation. The research of OF is supported by the European Research Council grant No. 258932.

\bibliographystyle{IEEEtranS}
\bibliography{IEEEabrv,scrambling}

\begin{thebibliography}{10}
\providecommand{\url}[1]{#1}
\csname url@samestyle\endcsname
\providecommand{\newblock}{\relax}
\providecommand{\bibinfo}[2]{#2}
\providecommand{\BIBentrySTDinterwordspacing}{\spaceskip=0pt\relax}
\providecommand{\BIBentryALTinterwordstretchfactor}{4}
\providecommand{\BIBentryALTinterwordspacing}{\spaceskip=\fontdimen2\font plus
\BIBentryALTinterwordstretchfactor\fontdimen3\font minus
  \fontdimen4\font\relax}
\providecommand{\BIBforeignlanguage}[2]{{%
\expandafter\ifx\csname l@#1\endcsname\relax
\typeout{** WARNING: IEEEtranS.bst: No hyphenation pattern has been}%
\typeout{** loaded for the language `#1'. Using the pattern for}%
\typeout{** the default language instead.}%
\else
\language=\csname l@#1\endcsname
\fi
#2}}
\providecommand{\BIBdecl}{\relax}
\BIBdecl

\bibitem{BM05}
L.~Bazzi and S.~Mitter, ``Endcoding complexity versus minimum distance,''
  \emph{IEEE Trans. Inform. Theory}, vol.~51, no.~6, pp. 2103--2112, 2005.

\bibitem{BDSW96}
C.~H. Bennett, D.~DiVincenzo, J.~A. Smolin, and W.~K. Wootters, ``Mixed-state
  entanglement and quantum error correction,'' \emph{Phys. Rev. A}, vol.~54,
  no.~5, pp. 3824--3851, 1996, arXiv:quant-ph/9604024.

\bibitem{BF13}
W.~Brown and O.~Fawzi, ``Decoupling with random quantum circuits,'' 2013, in
  preparation. see arXiv:1210.6644 for slightly weaker results.

\bibitem{BVPRL}
W.~Brown and L.~Viola, ``Convergence rates for arbitrary statistical moments of
  random quantum circuits,'' \emph{Phys. Rev. Lett.}, vol. 104, p. 250501,
  2010.

\bibitem{CRSS97}
A.~R. Calderbank, E.~M. Rains, P.~W. Shor, and N.~J.~A. Sloane, ``Quantum error
  correction and orthogonal geometry,'' \emph{Phys. Rev. Lett.}, vol.~78, pp.
  405--408, 1997, arXiv:quant-ph/9605005.

\bibitem{CG97}
R.~Cleve and D.~Gottesman, ``Efficient computations of encodings for quantum
  error correction,'' \emph{Phys. Rev. A}, vol.~56, pp. 76--82, 1997,
  arXiv:quant-ph/9607030.

\bibitem{CGL99}
R.~Cleve, D.~Gottesman, and H.~Lo, ``How to share a quantum secret,''
  \emph{Phys. Rev. Lett.}, vol.~83, no.~3, pp. 648--651, 1999.

\bibitem{DCEL09}
C.~Dankert, R.~Cleve, J.~Emerson, and E.~Livine, ``{Exact and approximate
  unitary 2-designs and their application to fidelity estimation},''
  \emph{Phys. Rev. A}, vol.~80, no.~1, p. 12304, 2009, arXiv:quant-ph/0606161.

\bibitem{ELL05}
J.~Emerson, E.~Livine, and S.~Lloyd, ``Convergence conditions for random
  quantum circuits,'' \emph{Phys. Rev. A}, vol.~72, no.~6, p. 060302, 2005.

\bibitem{RM}
J.~Emerson, Y.~Weinstein, M.~Saraceno, S.~Lloyd, and D.~Cory, ``Pseudo-random
  unitary operators for quantum information processing,'' \emph{Science}, vol.
  302, no. 5653, pp. 2098--2100, 2003.

\bibitem{GHKPV12}
A.~G{\'a}l, K.~Hansen, M.~Kouck{\`y}, P.~Pudl{\'a}k, and E.~Viola, ``Tight
  bounds on computing error-correcting codes by bounded-depth circuits with
  arbitrary gates,'' in \emph{Proc. ACM STOC}, 2012, pp. 479--494.

\bibitem{Got97}
D.~Gottesman, ``Stabilizer codes and quantum error correction,'' Ph.D.
  dissertation, Caltech, 1997, arXiv:quant-ph/9705052.

\bibitem{HL09}
A.~Harrow and R.~Low, ``Random quantum circuits are approximate 2-designs,''
  \emph{Comm. Math. Phys.}, vol. 291, pp. 257--302, 2009, arXiv:0802.1919v3.

\bibitem{KL97}
E.~Knill and R.~Laflamme, ``Theory of quantum error-correcting codes,''
  \emph{Phys. Rev. A}, vol.~55, no.~2, p. 900, 1997.

\bibitem{MN01}
C.~Moore and M.~Nilsson, ``Parallel quantum computation and quantum codes,''
  \emph{SIAM J. Comput.}, vol.~31, no.~3, pp. 799--815, 2002.

\bibitem{NC00}
M.~Nielsen and I.~Chuang, \emph{Quantum computation and quantum
  information}.\hskip 1em plus 0.5em minus 0.4em\relax Cambridge University
  Press, 2000.

\bibitem{ODP07}
R.~Oliveira, O.~Dahlsten, and M.~Plenio, ``Generic entanglement can be
  generated efficiently,'' \emph{Phys. Rev. Lett.}, vol.~98, no.~13, p. 130502,
  2007.

\bibitem{SP00}
P.~W. Shor and J.~Preskill, ``{Simple Proof of Security of the BB84 Quantum Key
  Distribution Protocol},'' \emph{Phys. Rev. Lett.}, vol.~85, no.~2, pp.
  441--444, 2000, arXiv:quant-ph/0003004.

\bibitem{Wil11}
M.~Wilde, \emph{From Classical to Quantum Shannon Theory}, 2011,
  arXiv:1106.1445.

\end{thebibliography}


\end{document}